\newcommand{\Toro}[2]{\Phi_{\pmb{#1}}(#2)}
\newcommand{\CurvaT}[1]{\pmb{s}_{T_{#1}}}
\newcommand{\R}{\mathbb{R}}
\newcommand{\Vo}{\mathcal{V}}
\newcommand{\Sc}{\mathcal{S}\mathcal{C}}
\newcommand{\PP}{\mathcal{P}}
\newcommand{\cc}{\pmb{c}}
\newcommand{\bb}{\pmb{b}}
\newcommand{\uu}{\pmb{u}}
\newtheorem{teo}{Theorem}
\newtheorem{ex}{Example}
\newtheorem{prop}{Proposition}
\begin{document}
%
\title{Curves on torus layers and coding for continuous alphabet sources}

\author{\IEEEauthorblockN{Antonio Campello\IEEEauthorrefmark{1}}
\IEEEauthorblockA{Institute of Mathematics, Statistics\\and Computer Science\\
University of Campinas, S\~ao Paulo \\
13083-859, Brazil\\
Email:  campello@ime.unicamp.br}
\and
\IEEEauthorblockN{Cristiano Torezzan\IEEEauthorrefmark{2}}
\IEEEauthorblockA{School of Applied Sciences\\
University of Campinas, S\~ao Paulo\\
13484-350, Brazil \\
Email: cristiano.torezzan@fca.unicamp.br}
\and
\IEEEauthorblockN{Sueli I. R. Costa\IEEEauthorrefmark{3}\thanks{Work partially supported by
FAPESP\IEEEauthorrefmark{1} under grant 2009/18337-6, CNPq\IEEEauthorrefmark{3}
309561/2009-4 and FAPESP 2007/56052-8}}
\IEEEauthorblockA{Institute of Mathematics, Statistics\\and Computer Science\\
University of Campinas, S\~ao Paulo \\
13083-859, Brazil\\
Email:  sueli@ime.unicamp.br}}


%


\maketitle

\begin{abstract}
In this paper we consider the problem of transmitting a continuous alphabet 
discrete-time source over an AWGN channel. The design of good curves for this purpose relies on geometrical properties of spherical codes and projections of $N$-dimensional lattices. We propose a constructive scheme based on a set of  curves on the surface of a $2N$-dimensional sphere and present comparisons with some previous works. 
\end{abstract}
\section{Introduction}
The problem of designing good codes for continuous alphabet sources to be transmitted over a channel with power constraint can be viewed as the one of constructing curves in the Euclidean space of maximal length and such that its folds (or laps) are a good distance apart. When the channel noise is bellow a certain threshold, a bigger length essentially means a higher resolution when estimating the sent value. On the other hand, if the folds of the curve come too close, this threshold will be small and the mean squared error (mse) will be dominated by larger errors as consequence of decoding to the wrong fold. Explicit constructions of curves for analog source-channel coding were presented, for example, in \cite{Sueli} and \cite{polynomial}. 

In this work, we will consider spherical curves in the $2N$-dimensional Euclidean space. We develop an extension of the construction presented in \cite{Sueli} to a set of curves on layers of flat tori. Our approach explores geometrical properties of spherical codes and projections of $N$-dimensional lattices. In the scheme presented here, homogeneity and low decoding complexity properties were preserved whereas the total length can be meaningfully increased.

This paper is organized as follows. In Section II, we introduce some mathematical background used in our approach. In Section III we state the problem, while in Section IV we present a scheme to design piecewise homogeneous curves on the Euclidean sphere and describe the encoding/decoding process. In Section V we derive a scaled version of the $\textit{Lifting Construction}$ \cite{FatStrut} suitable to our problem and present some examples and length comparisons with some previous constructions.

\section{Background}
\subsection{Flat Tori}
The unit sphere $S^{2N-1} \subset \R^{2N} $ can be foliated with flat tori \cite{BergerGostiaux, Cristiano} as follows. For each unit vector $\cc = (c_{1},c_{2},..,c_{N}) \in \R^{N}, c_i > 0$, and $\pmb{u}=(u_1,u_2,\ldots,u_N) \in \R^N$, let $\Phi_{\cc}:\R^N \rightarrow \R^{2N}$ be defined by
\begin{equation}
\small{
\Phi_{\cc} (\pmb u)=\left(c_{1}\left(\cos \frac{u_{1}%
}{c_{1}},\sin \frac{u_{1}}{c_{1}}\right),\dots,c_{N}\left(\cos \frac{u_{N}}{c_{N}},\sin \frac{u_{N}}{c_{N}}\right)\right).
}
\label{eq:Toro}
\end{equation}
This periodic function $\Phi _{\cc}$ is a local isometry on its image, the torus $ T_{\cc}$, a flat $N$-dimensional surface contained in the unit sphere $S^{2N-1}$. $ T_{\cc} = \Phi_{\cc}(R^N)$ is also the image of the hyperbox:
\begin{equation}
\label{para}
\PP_{\cc} := \{\uu \in \R^{N}; 0 \leq u_{i} < 2 \pi c_{i}\}, \ \ 1\leq i\leq N.
\end{equation}

Note also that each vector of $S^{2N-1}$ belongs to one, and only one, of these flat tori if we consider also the degenerated cases where some $\cc_i$ may vanish. Thus we say that the family of flat tori $T_{\cc}$ and their degenerations, with $\cc = (c_{1},c_{2},..,c_{N})$, $ \left\|  \cc  \right\|  =1$, $c_{i}  \geq 0$, defined above is a foliation on the unit sphere of $S^{2N-1}\subset \R^{2N}.$

It can be shown (Proposition 1 in \cite{Cristiano}) that the minimum distance between two points, one in each flat torus $T_{\bb}$ and $T_{\cc}$, is 
\begin{equation}
\label{eq:distdoistoros}
d(T_{\cc},T_{\bb})= \left\|  \cc- \bb \right\| = \left( \sum_{i=1}^N (c_i - b_i)^2\right)^{1/2}.
\end{equation}

The distance between two points on the same torus $T_{\pmb{c}}$ given by
\begin{equation*}
d(\Phi_{\pmb{c}}(\pmb{u}),\Phi_{\pmb{c}}(\pmb{v}))=2\sqrt{\sum c_{i}^{2}\sin^{2}(\frac{u_{i}-v_{i}%
}{2c_{i}})}\label{SameTorus}%
\end{equation*}
is bounded in terms of $\|\pmb{u}-\pmb{v}\|$ by the following proposition.
\begin{prop}\cite{Cristiano}
Let $\pmb{c=}(c_{1},c_{2},..,c_{N})$, $ \left\|  \cc  \right\|  =1$,
and let  $\pmb{u}, \pmb{v} \in \PP_{\cc}$. Let $\Delta =  \left\|\pmb{u}-\pmb{v}\right\|$ and $\delta= \left\|\Phi_{\pmb{c}}(\pmb{u})- \Phi_{\pmb{c}}(\pmb{v}) \right\|$.
Then
\begin{equation}
\displaystyle 
\frac{2}{\pi}\Delta\leq\dfrac{\sin\frac{\Delta}{2c_{\xi}}}{\frac{\Delta}{2c_{\xi}%
}}\Delta\leq\delta\leq\dfrac{\sin\frac{\Delta}{2}}{\frac{\Delta}{2}}\Delta
\leq\Delta
\label{eq:FamousRelation}
\end{equation}
where $\displaystyle c_{\xi} = \min c_i$.
\\
\end{prop}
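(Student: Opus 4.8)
The plan is to strip the statement down to one-variable facts about $\operatorname{sinc} x := \sin x/x$ and then verify the four links of the chain one at a time. Put $w_i := u_i - v_i$. From the single-torus distance formula,
\[
\delta^2 = 4\sum_{i=1}^N c_i^2\sin^2\!\Big(\frac{w_i}{2c_i}\Big) = \sum_{i=1}^N w_i^2\,\operatorname{sinc}^2\!\Big(\frac{w_i}{2c_i}\Big),\qquad \Delta^2=\sum_{i=1}^N w_i^2,
\]
while $\operatorname{sinc}^2(\tfrac{\Delta}{2})\,\Delta^2 = 4\sin^2(\tfrac{\Delta}{2})$ and $\operatorname{sinc}^2(\tfrac{\Delta}{2c_{\xi}})\,\Delta^2 = 4c_{\xi}^2\sin^2(\tfrac{\Delta}{2c_{\xi}})$. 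Since $\|\cc\|=1$ forces each $c_i\le 1$, and since in the regime in which the bound is meaningful (i.e. for $\uu,\pmb v$ close, the situation of interest; concretely $\Delta\le\pi c_{\xi}$) every argument of $\operatorname{sinc}$ above lies in $[0,\pi/2]$, all four terms of the chain are nonnegative, so it suffices to prove the chain with each term squared.

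The two extreme inequalities are then immediate: $\operatorname{sinc}\le 1$ gives $\operatorname{sinc}^2(\tfrac\Delta2)\Delta^2\le\Delta^2$, and Jordan's inequality $\operatorname{sinc}x\ge\tfrac2\pi$ on $[0,\pi/2]$, used at $x=\tfrac{\Delta}{2c_{\xi}}$, gives $\tfrac{4}{\pi^2}\Delta^2\le\operatorname{sinc}^2(\tfrac{\Delta}{2c_{\xi}})\Delta^2$. For the lower bound $\operatorname{sinc}^2(\tfrac{\Delta}{2c_{\xi}})\Delta^2\le\delta^2$, note that $c_{\xi}=\min_i c_i$ makes $|w_i/2c_i|\le|w_i|/2c_{\xi}\le\Delta/2c_{\xi}$; as $\operatorname{sinc}^2$ is nonincreasing on $[0,\pi]$ this yields $\operatorname{sinc}^2(w_i/2c_i)\ge\operatorname{sinc}^2(\Delta/2c_{\xi})$ for each $i$, and multiplying by $w_i^2$ and summing (with $\sum_i w_i^2=\Delta^2$) gives $\delta^2\ge\Delta^2\operatorname{sinc}^2(\Delta/2c_{\xi})$.

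The remaining link, the upper bound $\delta^2\le 4\sin^2(\tfrac\Delta2)$, is the real obstacle, and a purely coordinatewise estimate does not suffice: using only $c_i\le 1$ one gets $\operatorname{sinc}^2(w_i/2c_i)\le\operatorname{sinc}^2(w_i/2)$ and hence $\delta^2\le 4\sum_i\sin^2(w_i/2)$, but this last sum typically exceeds $4\sin^2(\Delta/2)$ (already for $N=2$ with $w_1=w_2$). Instead I would pass to cosines: using $\sin^2 x=\tfrac12(1-\cos 2x)$ and $\sum_i c_i^2=1$, the inequality $\delta^2\le 4\sin^2(\tfrac\Delta2)$ is equivalent to $\cos\Delta\le\sum_{i}c_i^2\cos(w_i/c_i)$. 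Setting $t_i:=w_i^2/c_i^2$ and $G(t):=\cos\sqrt t$, and noting $\sum_i c_i^2 t_i=\sum_i w_i^2=\Delta^2$, this reads $G\big(\sum_i c_i^2 t_i\big)\le\sum_i c_i^2 G(t_i)$, i.e. Jensen's inequality for $G$ against the convex weights $c_i^2$, so everything comes down to the convexity of $G$. A direct computation gives
\[
G''(t)=\frac{\sin\sqrt t-\sqrt t\cos\sqrt t}{4\,t^{3/2}},
\]
whose numerator $p(s)=\sin s-s\cos s$ has $p(0)=0$ and $p'(s)=s\sin s\ge 0$ on $[0,\pi]$; hence $p\ge 0$ and $G''\ge 0$ on $(0,\pi^2]$, and since $|w_i|\le\Delta\le\pi c_i$ forces $t_i\le\pi^2$ and $\Delta^2\le\pi^2$, Jensen applies and the upper bound follows. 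I expect this convexity/Jensen step to be the one genuine difficulty; the outer inequalities and the lower bound are then routine consequences of Jordan's inequality and the monotonicity of $\operatorname{sinc}$ on $[0,\pi]$.
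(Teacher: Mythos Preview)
The paper does not supply its own proof of this proposition; it is quoted with a citation to \cite{Cristiano} and left unproved here, so there is nothing in the present paper to compare your argument against directly.

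That said, your proof is correct. You rightly note that the chain as written requires a restriction such as $\Delta\le\pi c_{\xi}$ (already the leftmost inequality, via Jordan, fails once $\Delta/2c_{\xi}>\pi/2$), and under that hypothesis all four steps go through. The reduction of the upper bound $\delta\le 2\sin(\Delta/2)$ to Jensen's inequality for $G(t)=\cos\sqrt t$ on $[0,\pi^2]$ is clean and correctly executed: the computation $G''(t)=(\sin\sqrt t-\sqrt t\cos\sqrt t)/(4t^{3/2})$ and the sign analysis of $\sin s-s\cos s$ on $[0,\pi]$ are both right, and the constraint $\Delta\le\pi c_{\xi}$ indeed forces every $t_i=w_i^2/c_i^2$ into $[0,\pi^2]$. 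The remaining three links are, as you say, routine consequences of the monotonicity of $\operatorname{sinc}$ on $[0,\pi]$ and Jordan's inequality. Your observation that a purely coordinatewise estimate cannot yield the third inequality (because $\sum_i\sin^2(w_i/2)$ generally exceeds $\sin^2(\Delta/2)$) is also correct and explains why the convexity argument is genuinely needed.
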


\subsection{Curves}

As pointed out in \cite{Sueli}, some important properties of a curve from a communication point of view are its \textit{stretch} and \textit{small-ball radius}. Given a curve $\pmb{s}: [a,b] \mapsto \mathbb{R}^N$, the Voronoi region $V(x)$ of a point $\pmb{s}(x)$ is the set of all points in $\mathbb{R}^N$ which are closer to $\pmb{s}(x)$ than to any other point of the curve. If $H(x)$ denotes the hyperplane orthogonal to the curve at $\pmb{s}(x)$, then the maximal \textit{small-ball radius} of $\pmb{s}$ is the largest $r > 0$ such that $B_r(\pmb{s}(x)) \cap H(x) \subset V(x)$ for all $x \in [a,b]$, where $B_r(\pmb{s}(x))$ is the Euclidean ball of radius $r$ centered at $\pmb{s}(x)$. Intuitively this means that the ``tube'' of radius $r$ placed along the curve does not intersect itself. The \textit{stretch} $\mathcal{S}(x)$ is the function $\left\| \dot{\pmb{s}}(x) \right\|$ where $\dot{\pmb{s}}(x)$ is the derivative of $\pmb{s}(x)$. If $\left\|\dot{\pmb{s}}\right\|$ does not depend on $x$ (as it is the case of the curves considered here) we will refer to the stretch as simply $\mathcal{S}$. The length of a curve is given by $\int_a^b \mathcal{S}(x) dx$. In this paper we will be interested in curves with large length and small-ball radius.

\section{Problem statement}

The underlying communication system we consider here is illustrated in Figure \ref{fig:com}. Given an input real value $x$, within the unit interval $[0,1]$, the encoder maps $x$ into a point $\pmb{s}(x)$ of a curve $\pmb{s}$ on $S^{2N-1}$, which will be sent over an AWGN channel. By properly scaling the curve we guarantee that the transmitted energy is $\alpha^2$. The decoder will then compute an estimate for the sent value while trying to minimize the mean squared error (mse) $E[(X-\hat{X})^2]$ of the process. 

For the torus layers scheme, the unit interval will be partitioned into $M$ intervals of different length, and each of them mapped into a curve on one of the layers. It is worth noticing that, for the special case $M=1$, if we choose the torus associated to the vector $c = \frac{1}{\sqrt{N}} (1,\ldots,1)$ to encode the information, then the scheme proposed here is exactly the one analysed in \cite{Sueli}. However, we are interested in the analysis for $M > 1$, in which case the curves to be presented outperform the ones presented in \cite{Sueli} (see also \cite{FatStrut}) in terms of the tradeoff between length and small-ball radius. \\

\begin{figure}[!htb]
\includegraphics[scale=0.65]{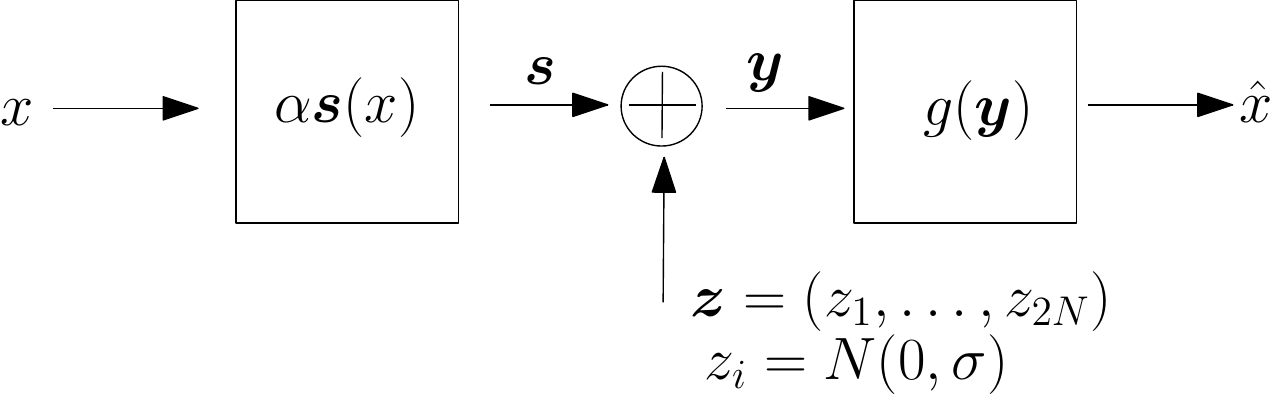}
\caption{Communication system}	
\label{fig:com}
\end{figure}

The design of those curves in the next section is essentially divided in two parts. First, we choose a collection of tori on the surface of the Euclidean sphere $S^{2N-1}$ at least $\delta$ apart. The approach for doing this is via discrete spherical codes in $\mathbb{R}^N$. Second, we show a systematic way of constructing curves on each layer, via projection-lattices in $ \mathbb{R}^{N-1}$. Finally, we give a description of the whole signal locus and summarize the encoding process.

\section{Our approach}

\subsection{Torus layers}
\label{sec:torusLayers}
Given a fixed small ball radius $\delta$, the first step of our approach is to define a collection $T=\left\{T_1, T_2, \cdots, T_M \right\}$ of flat tori on $S^{2N-1}$ such that the minimum distance \eqref{eq:distdoistoros} between any two of them is greater than $2\delta$. This step is equivalent to design a $N$-dimensional spherical code $\Sc$ with minimum distance $2\delta$ and consider just the points with non-negative coordinates.

We denote this sub-code by $$\Sc_+ = \left\{c  \in \Sc, c_i \geq 0, \ \ 1 \leq i \leq N  \right\}.$$
Each point $\cc \in \Sc_+$ defines a hyperbox $\PP_{\cc}$ (\ref{para}) and hence a flat torus $T_{\cc}$ in the unit sphere $S^{2N-1}$. 

There are several ways of constructing spherical codes that can be employed here, e.g. \cite{eric,Hamkins1} or even on layers of flat tori as introduced in \cite{Cristiano}.

\subsection{Curves on each torus}
Let $T_{\pmb{c}}$ be a torus represented by the vector $\pmb{c} \in \Sc_+$ as defined in the previous section. On the surface of $T_{\pmb{c}}$ we will consider curves of the form: 
\begin{equation}
\CurvaT{c}(x) = \Toro{c}{2\pi \hat{\pmb{u}} x},
\label{def:Curva}
\end{equation}
where $C = \mbox{diag}(c_1,\ldots,c_L)$, $\hat{\pmb{u}} = \pmb{u}C = (c_1 u_1, \ldots, c_N u_N)$, $\Phi_{\pmb{c}}$ is given by \eqref{eq:Toro} and $x \in [0,1]$.

Provided that $\pmb{u} \in \mathbb{Z}^N, \gcd(u_i) = 1$, those curves are closed (a type $(u_1,\ldots,u_N)$-knot), and due to periodicity and local isometry properties of $\Phi_{\pmb{c}}$ their lengths are $2 \pi \left\| \hat{\pmb{u}} \right\|$. They are also the image through $\Phi_{\pmb{c}}$ of the intersection between the set of lines $W = \left\{ \hat{\pmb{u}}x + \hat{\pmb{n}} : \hat{\pmb{n}} = \pmb{n} C, \pmb{n} \in \mathbb{Z}^{L} \right\}$ and the box $\PP_{\cc}$. For \linebreak $c = \frac{1}{\sqrt{N}}(1,\ldots,1)$, these are exactly the curves analysed in \cite{Sueli}.

Let $r_{\pmb{c}}(\pmb{u})$ be the minimum distance between two different lines in $W$, then we have
\begin{equation}
\begin{split}
r_{\pmb{c}}(\pmb{u}) &:= \min_{\hat{\pmb{n}} \neq k \pmb{\hat{u}}, k \in \mathbb{Z}} \min_{\hat{x}, x} \left\| \hat{\pmb{u}}x - (\hat{\pmb{u}}\hat{x}+\hat{\pmb{n}}) \right\| \\
& = \min_{\hat{\pmb{n}} \neq k \pmb{\hat{u}}, k \in \mathbb{Z}} \min_{x} \left\| \hat{\pmb{u}}x - \hat{\pmb{n}} \right\| \\
& = \min_{\hat{\pmb{n}} \neq k \pmb{\hat{u}}, k \in \mathbb{Z}} \left\| P_{\hat{\pmb{u}}^{\perp}} (\hat{\pmb{n}}) \right\| \\
& = \min_{\hat{\pmb{n}} \notin \hat{\pmb{u}}^{\perp}} \left\| P_{\hat{\pmb{u}}^{\perp}} (\hat{\pmb{n}}) \right\|,
\end{split}
\end{equation}
where $P_{\hat{\pmb{u}}^{\perp}} (\hat{\pmb{n}})$ denotes the orthogonal projection of $\hat{\pmb{n}}$ onto the hyperplane $\hat{\pmb{u}}^{\perp}$ which is given by the standard projection formula

\begin{equation}
P_{\hat{\pmb{u}}^{\perp}} (\hat{\pmb{n}}) = \hat{\pmb{n}} \left(I_N - \frac{\pmb{\hat{u}}^t \pmb{\hat{u}}}{\pmb{\hat{u}} \pmb{\hat{u}}^t}\right).
\label{eq:proj}
\end{equation}

Let  $\Lambda_{\pmb{c}} = c_1 \mathbb{Z} \oplus \ldots \oplus c_N \mathbb{Z}$ be the rectangular lattice generated by matrix $C$, then  $ r_{\pmb{c}}(\pmb{u})$ is the length of shortest non-zero vector of the projection\footnote{In general, the projection of a lattice $\Lambda$ onto a subspace $H$ is \textit{not} a lattice unless certain special conditions are met, e.g., when $H^\perp$ is spanned by primitive vectors of $\Lambda$ \cite{Perfect}. This will be always the case in this paper, since $H = \hat{\pmb{u}}^\perp$ for a primitive vector $\hat{\pmb{u}}$.} of $\Lambda_{\pmb{c}}$ onto $\hat{\pmb{u}}^{\perp}$. Due to Equation \eqref{eq:FamousRelation}, the small-ball radius $\delta_{\pmb{u},\pmb{c}}$ of $\pmb{s}_{T_{\pmb{c}}}$ can be bounded in terms of $r_{\pmb{c}} (\pmb{u})$ as follows:

\begin{equation}
2 c_{\xi} \sin \left({\frac{\pi  r_{\pmb{c}}(\pmb{u})}{2 c_{\xi}}}\right) \leq \delta_{\pmb{u},\pmb{c}} \leq 2 \sin \left({\frac{\pi r_{\pmb{c}}(\pmb{u})}{2}}\right),
\label{eq:distortionCurve}
\end{equation}
where $c_{\xi} = \displaystyle \min_{i}{c_i}$ and $c_i > 0$. Thus, for small values of $\delta_{\pmb{u}, \pmb{c}} $, we have $\delta_{\pmb{u},\pmb{c}} \approx  \pi r_{\pmb{c}}(\pmb{u})$. Our goal is to choose $\pmb{u}$ in order to maximize $r_{\pmb{c}}(\pmb{u})$. In addition, we also want to reach a contrary objective, which is the one of maximizing the arc length \linebreak $l_{\pmb{u},\pmb{c}} = 2 \pi \left\|\hat{\pmb{u}} \right\|$ of $\pmb{s}_{T_{\pmb{c}}}$. 

It is possible to show (Proposition 1.2.9. in \cite{Perfect}) that the density of the lattice $P_{\hat{\bm{u}}^{\perp}} (\Lambda_{\bm{c}})$, the projection of $\Lambda_{\bm{c}}$ onto $\hat{\bm{u}}^{\perp}$, is given by:

\begin{equation}
\Delta(P_{\hat{\bm{u}}^{\perp}} (\Lambda_{\bm{c}})) = \frac{r_{\pmb{c}}(\pmb{u})^{N-1} \left\|\hat{\pmb{u}} \right\|}{2^{N-1} \prod_{i=1}^N c_i} \leq \frac{\Delta_{N-1}}{\Vo_{N-1}}
\label{eq:BoundDensidade}
\end{equation}
where $\Delta_{N-1}$ is the density of the best lattice in dimension $(N-1)$ and $\Vo_{N-1}$ is the volume of the $(N-1)$-dimensional unit sphere. For the case when all entries of $\bm{c}$ are equal, $\Lambda_{\bm{c}}$ is equivalent to $\mathbb{Z}^N$ and it was shown in \cite{FatStrut} that we can make the above bound as tight as we want. We will show in Section V that this is also true for an arbitrary $\bm{c}$ i.e., that projections of the rectangular lattice $\Lambda_{\pmb{c}}$ can also yield dense lattice packings and therefore we can construct curves on the flat torus with the parameters arbitrary close to this bound.

\begin{ex} Let $N = 2$. Consider the local isometry
\begin{equation}
\Phi_{\cc} (\pmb u)=\left(c_{1} \cos \frac{u_{1}
}{c_{1}}, c_1 \sin \frac{u_{1}}{c_{1}} ,c_{2}\cos \frac{u_{2}}{c_{2}},c_2 \sin \frac{u_{2}}{c_{2}}\right)
\end{equation}
on the flat torus $T_{\pmb{c}}$ and the line segment given by \linebreak $\pmb{v}(x) = x \pmb{v}$, $= x (2 \pi u_1 c_1, 2\pi u_2 c_2),u_1, u_2 \in \mathbb{Z}$ and $0 \leq x \leq 1$. The curve $\CurvaT{c}(x)$ will be the composition $\Phi(\pmb{v}(x))$ and we have
\begin{equation} 
r_{\pmb{c}}(\pmb{v}) \left\| \pmb{v} \right\| = 2\pi c_1 c_2 \Rightarrow  r_{\pmb{c}}(\pmb{v}) = \frac{c_1 c_2}{\sqrt{ w^2 c_1^2+ (w+1)^2 c_2^2}}
\end{equation}

This curve in $\mathbb{R}^4$ will turn around $u_1$-times the circle obtained by its projection on the first two coordinates, whereas turning around $u_2$-times the circle of radius $c_2$ given by its last two coordinates (a type $(u_1,u_2)$ knot on the flat torus $T_{\pmb{c}}$). For this case, we can calculate the exact small-ball radius $\delta$ as $d(\Phi( \alpha \pmb{v}^{\perp} ), \Phi(0))$ where $\pmb{v}^{\perp} = (-2\pi c_2 u_2, 2 \pi c_1 u_1)$ and $\alpha = \pi c_1 c_2 / \left\| \pmb{v} \right\|^2$. In Figure \ref{fig:encodingprocess} it is illustrated the curve $(u_1,u_2)=(4,5)$, with $c_1 > c_2$.
\end{ex}

\subsection{Encoding}
Let $T = \left\{ T_1, \ldots, T_M \right\}$ be a collection of $M$ tori as designed in Section \ref{sec:torusLayers}. For each one of these tori, let $\CurvaT{k}(x) = \Phi(2\pi \hat{\pmb{u}}_kx), (k=1,2,\ldots,M)$ be the curve on $T_k$, determined by the vector $\hat{\pmb{u}}_k$ \eqref{def:Curva} and  consider $L = \sum_{j=1}^M l_j$, where $l_k$ is the length of $\CurvaT{k}$.

Now split the unit interval $[0,1]$ into $M$ pieces according to the length of each curve:
\begin{equation*}
\left[0,1\right) = I_1 \cup I_2, \ldots \cup I_M\mbox{, where}
\end{equation*}
\begin{equation*}
I_k = \left[\frac{\sum_{j=1}^{k-1} l_j}{L},\frac{\sum_{j=1}^{k} l_j}{L} \right), \mbox{ for } k = 1, \ldots, M.
\end{equation*}
and consider the bijective mapping
$$\begin{array}{cc}
f_k: I_k  \to [0,1) \\
f_k(x) = \displaystyle\frac{x - \sum_{j=1}^{k-1} l_j/L}{l_k/L}.
\end{array}
$$
Then the full encoding map $\pmb{s}$ can be defined by
\begin{equation}
\pmb{s}(x) := \CurvaT{k}(f_k(x)), \mbox{ if } x \in I_k.
\label{eq:encoding}
\end{equation}

The stretch of $\pmb{s}$ will be constant and equal its total length $L$ and the small-ball radius of $\pmb{s}$ is the minimum small-ball radius $\delta$ of the curves $\CurvaT{k}$, provided that the distance between any pair of torus in $T$ is at least $2\delta$.


To encode a value $x$ within $[0,1]$ we apply the map \eqref{eq:encoding}. The signal locus will be a set of $M$ closed curves, each one lying on a torus layer $T_k$ and defined by a vector $\pmb{u}_k$. This whole process is illustrated in Figure \ref{fig:encodingprocess}.

\begin{figure}[!htb]
\centering
\includegraphics[scale=0.7]{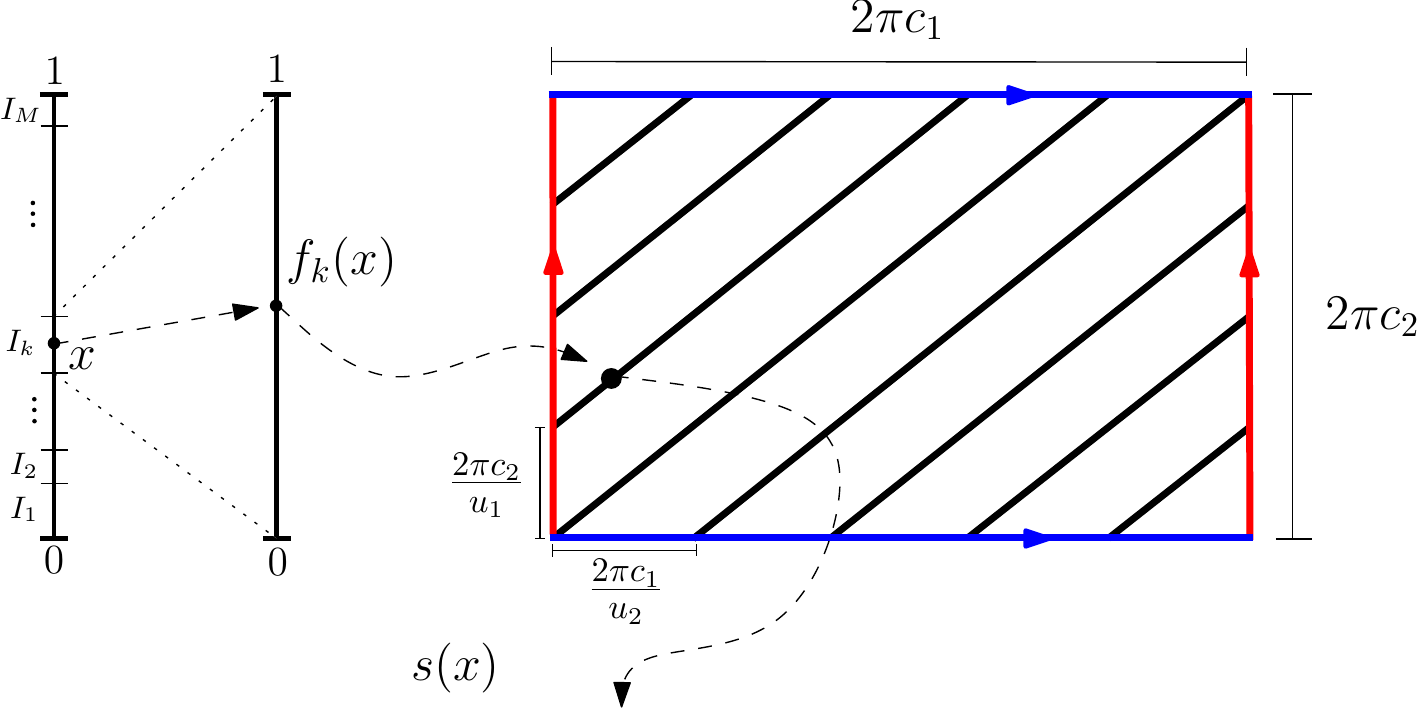}
\includegraphics[scale=0.18]{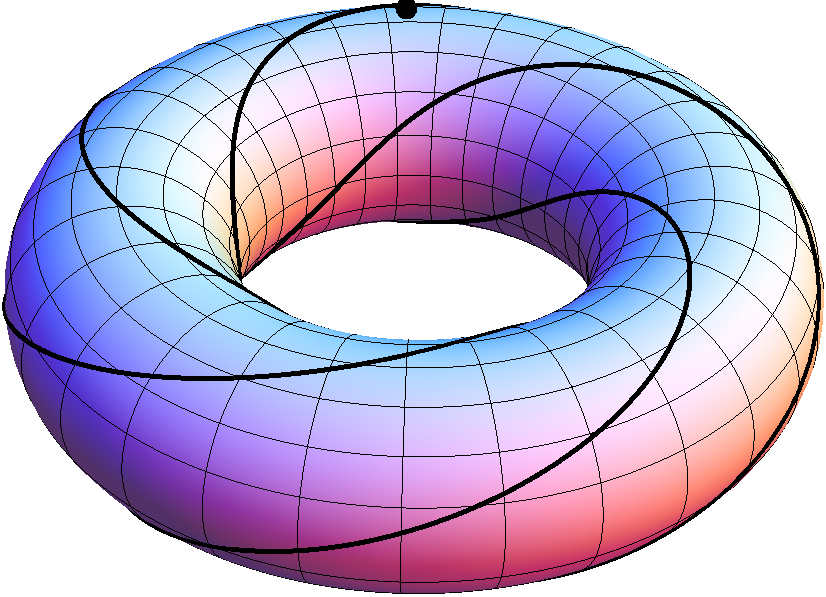}
\caption{Encoding Process}
\label{fig:encodingprocess}
\end{figure}

If the source is uniformly distributed over $[0,1]$, the encoding process presented above is a proper one, since all subintervals will be equally stretched. For other applications, however, it could be worth considering another partition.
\subsection{Decoding}
Given a received vector $\pmb{y} \in \mathbb{R}^{2N}$, the maximum likelihood decoding is finding $\hat{x}$ such that:

$$\hat{x} = \mbox{arg}\min_{x \in [0,1]} \left\|\pmb{y} - \pmb{s}(x)\right\|.$$

Since exactly solving this problem is computationally expensive we focus on a suboptimal decoder.

For $ 0 \neq \gamma_i = \sqrt{y_{2i-1}^2+y_{2i}^2} $, we can write
{\small
\begin{eqnarray}
\pmb{y} & = &  \left( \gamma_1 \left( \frac{y_1}{\gamma_1}, \frac{y_2}{\gamma_1} \right), \hdots , \gamma_N \left( \frac{y_{2N-1}}{\gamma_N}, \frac{y_{2N}}{\gamma_N} \right)
 \right) \nonumber \\
\pmb{y} & = &  \left( \gamma_1\left(\cos{\frac{\theta_{1}}{\gamma_1}}, \sin{\frac{\theta_{1}}{\gamma_1}} \right), \hdots ,
\gamma_N \left(\cos{\frac{\theta_{N}}{\gamma_N}},\sin{ \frac{\theta_{N}}{\gamma_N}} \right) \right), \nonumber 
\end{eqnarray}
}
where,
 \begin{eqnarray}
  \theta_{i} & = & \arccos{ \left( \dfrac{y_{2i-1}}{\gamma_i} \right) } \gamma_i,  \ \ 1 \leq i \leq N \nonumber.
 \end{eqnarray}

The process of finding the closest layer involves a $N$-dimensional spherical decoding of $\pmb{\gamma}=(\gamma_1,\gamma_2, \cdots, \gamma_N)$, which has complexity $\linebreak \mathcal{O}(MN)$.

Let $\pmb{c}_i = (c_{i1}, c_{i2}, \cdots, c_{iN})$ be the closest point in $\in \Sc_+$ to $\pmb{\gamma}$ and
$$\bar{\pmb{y}}_i = \left( c_{i1}\left(\cos{\frac{\theta_{1}}{\gamma_1}}, \sin{\frac{\theta_{1}}{\gamma_1}} \right), \hdots ,
c_{iN} \left(\cos{\frac{\theta_{N}}{\gamma_N}},\sin{ \frac{\theta_{N}}{\gamma_N}} \right) \right)$$ be the projection of $\pmb{y}$ in the torus $T_{c_i}$, i.e., 
$$
||\pmb{y}-\bar{\pmb{y}}_i|| \leq || \pmb{y} - \pmb{v} || \, ,  \forall \, \pmb{v} \in T_{c_i}.
$$
From now on, we proceed the process by using a slight modification of the torus decoding algorithm \cite{Sueli} applied to the $N$-dimensional hyperbox $\mathcal{P}_{\cc_i}$. The complexity of this algorithm is given by $\mathcal{O}(N \left\| \pmb{u}_i \right\|_1)$, where $\pmb{u}_i$ is the vector that determines the curve $\CurvaT{i}$. Hence, if $M = \mathcal{O}( \max_i{\left\| \pmb{u}_i \right\|_1)}$, the overall complexity of the process described in this section will be $\mathcal{O}(N \max_i{\left\| \pmb{u}_i \right\|_1)}$, the same as for the torus decoding.

\section{A Scaled Lifting Construction}
\subsection{The construction}
The \textit{Lifting Construction} was proposed in \cite{FatStrut} as a solution to the problem of finding dense lattices which are equivalent to orthogonal projections of $\mathbb{Z}^N$ along integer vectors (``fat-strut'' problem). In this section we adapt that strategy in order to construct projections of the lattice $\Lambda_{\pmb{c}}$ which approximate any $(N-1)$ dimensional lattice (hence the densest one) with the objective of finding curves in $\mathbb{R}^N$ approaching the bound \eqref{eq:BoundDensidade}. We adopt here the lattice terms as in \cite{SloaneLivro}. For our purposes, the proximity measure for lattices will be the distance between their Gram matrices, as in \cite{FatStrut}. This notion measures how close a lattice is to another one up to congruence transformations (rotations or reflections).

The dual of a lattice $\Lambda \in \mathbb{R}^N$ is the set $\Lambda^* = \{ \pmb{x} \in \mbox{span}(\Lambda): \langle \pmb{x}, \pmb{y} \rangle \in \mathbb{Z} \,\, \forall \pmb{y} \in \Lambda\}$ where $\mbox{span}(\Lambda)$ is the subspace spanned by a basis of $\Lambda$. Now let $\Lambda_{\pmb{c}} = c_1 \mathbb{Z} \oplus \ldots \oplus c_N \mathbb{Z}$ be the rectangular lattice generated by the diagonal matrix $C$ (and with Gram matrix $CC^t = C^2$). By scaling $\Lambda_{\pmb{c}}$, we can assume that $c_1 = 1$. With this condition, if $P_{\pmb{u}^{\perp}}(\Lambda_{\pmb{c}})^*$ denotes the dual of the projection of $\Lambda_{\pmb{c}}$ onto a vector $\hat{\pmb{u}}=(1, u_2 c_2 \ldots, u_N c_N)$ ($u_i \in \mathbb{Z}$), then a generator matrix for $P_{\hat{\pmb{u}}^{\perp}}(\Lambda_{\pmb{c}})^*$ is given by

\begin{equation}
M = \begin{pmatrix}-u_2 & 1/c_2 & 0 & \ldots & 0 \\
-u_3 & 0 & 1/c_3 & \ldots & 0 \\
\vdots & \vdots & \vdots & \ddots & \vdots \\
-u_n & 0 & 0 & \ldots & 1/c_n.
\end{pmatrix}
\label{matrizBoa},
\end{equation}
what can be derived as a consequence of Prop. 1.3.4 \cite{Perfect}. In what follows we derive a general construction of projections such that $P_{\pmb{u}^{\perp}}$ is arbitrarily close to a target lattice $\Lambda \in \mathbb{R}^{N-1}$.

\begin{teo} Let $\Lambda_{\pmb{c}} = \mathbb{Z} \oplus c_2 \mathbb{Z} \ldots \oplus c_N \mathbb{Z}$, $c_i \in \mathbb{R}$. Let $\Lambda$ be a target lattice in $\mathbb{R}^{N-1}$ and consider a lower triangular generator matrix  $L^* = (l_{ij}^*)$ for $\Lambda^*$. If $\Lambda_w^*$, $w \in \mathbb{N}$ is the sequence of lattices generated by the matrices

\begin{equation}
L_w^* := \left(\begin{array}{ccccc}
\lfloor{w l_{11}^*}\rfloor & \frac{1}{c_2} & \ldots & \ldots & 0 \\
\lfloor{w l_{21}^*} \rfloor & \frac{\lfloor{w l_{22}^* c_2}\rfloor }{c_2} & \ldots & \ldots & 0 \\
\vdots & \vdots & \ddots & \ldots & 0 \\
\lfloor{w l_{n1}^*} \rfloor & \frac{\lfloor{w l_{n2}^* c_2} \rfloor }{c_2} & \ldots & \frac{\lfloor{c_{n-1} w l_{nn}^*} \rfloor}{c_{n-1}} & \frac{1}{c_n}
\end{array}\right),
\label{matrizCons}
\end{equation}
then:
\begin{enumerate}
\item[(i)] $L_w^* = P_{\hat{\pmb{u}}^{\perp}}(\Lambda)^*$ for some $\hat{\pmb{u}} \in \mathbb{R}^N$ and
\item[(ii)] $(1/w^2) L_w^* L_w^{*t} \to L^* L^{*t}$ as $w \to \infty$.
\end{enumerate}
In other words, for large $w$, $\Lambda$ will be approximated (in the sense of the Gram matrices) by projections of $\Lambda_{\pmb{c}}$.
\end{teo}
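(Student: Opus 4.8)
The plan is to treat the two assertions separately. Claim (ii) is a direct limit computation; claim (i) is the substantive part, and the idea is to show that $L_w^*$ is, for a suitable \emph{integer} choice of $u_2,\ldots,u_N$, nothing but the generator matrix $M$ of \eqref{matrizBoa} of $P_{\hat{\pmb{u}}^\perp}(\Lambda_{\pmb{c}})^* = \Lambda_{\pmb{c}}^*\cap\hat{\pmb{u}}^\perp$ written in a different integral basis.

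\emph{Proof of (i).} First fix the vector. Look for $\hat{\pmb{u}} = (1, u_2 c_2,\ldots, u_N c_N)$ and determine the $u_i$ recursively by matching the \emph{first column} of $L_w^*$ against the first column of a matrix of the form \eqref{matrizBoa}. Writing $m_{ij}$ ($i>j$) for the integer numerator sitting in column $j{+}1$ of row $i$ of \eqref{matrizCons}, this gives $u_2 = -\lfloor w l_{11}^*\rfloor$ and $u_{i+1} = -\lfloor w l_{i1}^*\rfloor - \sum_{j<i} m_{ij}\,u_{j+1}$; since all the data are integers, every $u_i\in\Z$. Hence $\hat{\pmb{u}}\in\Lambda_{\pmb{c}}$ with lattice coordinates $(1,u_2,\ldots,u_N)$, and because the leading coordinate is $1$ this vector is primitive, so the footnote hypothesis holds, $P_{\hat{\pmb{u}}^\perp}(\Lambda_{\pmb{c}})$ is a genuine lattice, and by the discussion preceding the theorem its dual $\Lambda_{\pmb{c}}^*\cap\hat{\pmb{u}}^\perp$ is generated by the matrix $M$ of \eqref{matrizBoa} for these $u_i$. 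It then suffices to exhibit a unimodular $U$ with $L_w^* = UM$: take $U$ lower triangular with $U_{ii}=1$ and $U_{ij}=m_{ij}$ for $i>j$. A row-by-row inspection of \eqref{matrizCons} confirms $L_w^* = UM$: the super-diagonal entry $1/c_{i+1}$ of row $i$ of $L_w^*$ comes untouched from row $i$ of $M$ (the diagonal of $U$ is $1$ and no later row of $M$ is added to it), the entries of the form $m_{ij}/c_{j+1}$ below it are exactly the integer multiples of earlier rows of $M$, and the first column is the triangular system solved above. Since $\det U = 1$, $L_w^*$ generates $\Lambda_{\pmb{c}}^*\cap\hat{\pmb{u}}^\perp = P_{\hat{\pmb{u}}^\perp}(\Lambda_{\pmb{c}})^*$, which is (i). (When all $c_i$ coincide, $\Lambda_{\pmb{c}}\simeq\Z^N$ and one recovers the Lifting Construction of \cite{FatStrut}.)

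\emph{Proof of (ii).} Every entry of $L_w^*$ is either of the form $\lfloor w c_k\, l_{ij}^*\rfloor/c_k$ for suitable indices (with the convention $c_1=1$ for the first column), or one of the constants $1/c_{i+1}$, or $0$. Dividing by $w$ and using $|\lfloor t\rfloor - t|<1$, the first type converges to $l_{ij}^*$ and the constant type converges to $0$; hence $(1/w)L_w^*$ converges entrywise to $[\,L^*\mid\pmb 0\,]$, the matrix obtained from $L^*$ by appending a zero column. Consequently $(1/w^2) L_w^* L_w^{*t} = \big((1/w)L_w^*\big)\big((1/w)L_w^*\big)^t \to [\,L^*\mid\pmb 0\,][\,L^*\mid\pmb 0\,]^t = L^* L^{*t}$, which is the claimed convergence of Gram matrices; dualizing (a continuous, congruence-preserving operation) then shows that the projections $P_{\hat{\pmb{u}}^\perp}(\Lambda_{\pmb{c}})$ approximate $\Lambda$.

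The main obstacle is the part of (i) asserting that $L_w^*$ generates the \emph{full} intersection $\Lambda_{\pmb{c}}^*\cap\hat{\pmb{u}}^\perp$ rather than a proper sublattice of it: the extra super-diagonal entries $1/c_{i+1}$ in \eqref{matrizCons} are precisely what makes this true, and the clean way to certify it is the factorization $L_w^* = UM$ with $U$ unimodular, combined with the already-established fact that $M$ itself is a generator matrix of $P_{\hat{\pmb{u}}^\perp}(\Lambda_{\pmb{c}})^*$. Everything else — solving the triangular system for the $u_i$, the primitivity of $\hat{\pmb{u}}$, and the limit in (ii) — is routine once that factorization is in place.
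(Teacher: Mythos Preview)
Your proof is correct and follows essentially the same approach as the paper's own proof. For (i) the paper simply states that elementary integer row operations bring $L_w^*$ into the form \eqref{matrizBoa}; your explicit factorization $L_w^* = UM$ with $U$ unit lower triangular having integer entries $m_{ij}$ is exactly this row reduction made concrete, and for (ii) both arguments are identical (entrywise convergence of $(1/w)L_w^*$ to $[\,L^*\mid\pmb 0\,]$, then multiply).
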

\begin{proof} Through applying elementary (integer) operations on $L_w^*$ we can put it on form \eqref{matrizBoa} for some integers $u_2,\ldots,u_n$ depending on $w$, hence $L_w^*$ is a generator matrix for $P_{\hat{\pmb{u}}^{\perp}}(\Lambda)^*$, proving the first statement.
For the second statement, we clearly have $(1/w) L_w^* \to \left[ L^* \,\,\,\,\, \pmb{0} \right]$ as $w \to \infty$, where $\pmb{0}$ is the $(n-1) \times n$ all-zero column vector. Therefore, $(1/w^2) L_w^* L_w^{*t} \to \left[ L^* \,\,\,\,\, \pmb{0} \right]\left[ L^* \,\,\,\,\, \pmb{0} \right]^t = L^* L^{*t}$.
\end{proof}

\begin{ex} Consider the hexagonal lattice \cite{SloaneLivro}, which is the best packing in two dimensions and is equivalent to its dual. One of its generator matrix is

$$L^* = \left(\begin{array}{cc} 1 & 0 \\ \frac{1}{2} & \frac{\sqrt{3}}{2} \end{array} \right).$$

We apply the construction above and reduce, through elementary operations, the matrix $L_w^*$ in order to put it on form \eqref{matrizBoa}. After re-scaling the rectangular lattice $\Lambda_{\pmb{c}}$, we find the sequence of vectors

$$\hat{\pmb{u}}_w = (c_1, -2wc_2, (2w \lfloor w \sqrt{3} c_2/c_1 \rfloor - w) c_3).$$ 
The projections of $c_1 \mathbb{Z} \oplus c_2 \mathbb{Z} \oplus c_3 \mathbb{Z}$ onto $\hat{\pmb{u}}_w^\perp$ will be, up to equivalence, arbitrarily close to $A_2$ when $w \to \infty$.
\end{ex}

\subsection{Comparisons: Curves in $\mathbb{R}^6$}
\label{exemplo}
Here we compare our approach of construct curves on torus layers with the curves constructed in \cite{FatStrut}  and \cite{Sueli} in terms of length for given small-ball radii. Given $\delta > 0$, we first consider a set of flat tori associated to a spherical code in $R^3$, with minimum distance greater than $2 \delta$, as described in Section \ref{sec:torusLayers}. Through the first inequality of \eqref{eq:distortionCurve}, for each torus, we can find $r_{\pmb{c}}$ in order to guarantee that the curves on the flat tori will have small-ball radius at least $\delta$ (this is also done in the case of the curves on the torus $c = 1/\sqrt{L}(1,1,1)$). We then look for the larger element of the sequence of vectors that produces a projection with minimum distance at least $r_{\pmb{c}}$. 

\begin{figure}[htb!]
\centering
\includegraphics[scale=0.55]{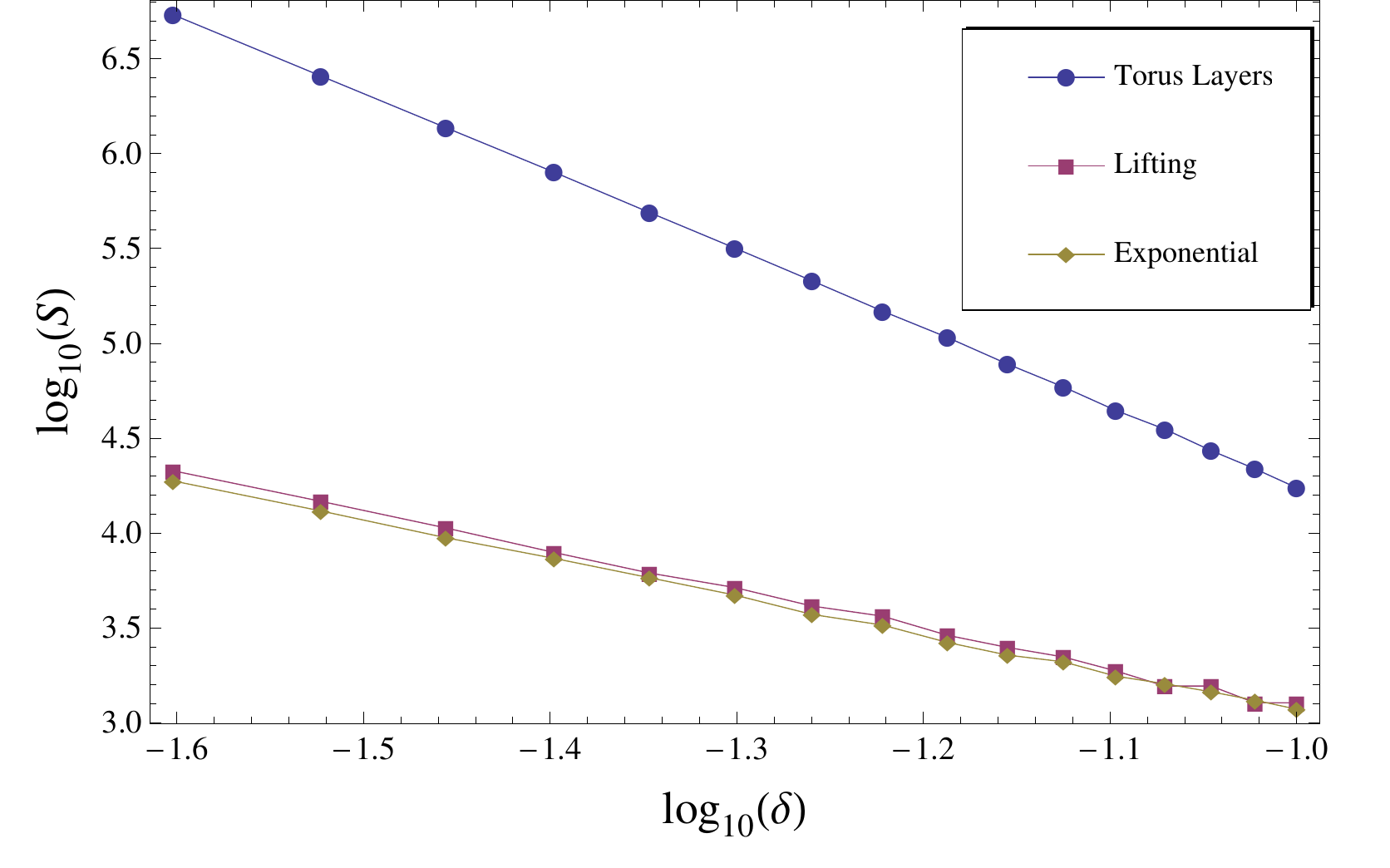}
\caption{Comparison between diferent approaches in terms of small-ball radius $\delta$ and arc-length $\mathcal{S}$. }
\label{grafico}
\end{figure}

In Figure \ref{grafico}, the first curve from the bottom to the top represents the exponential sequence \cite{Sueli}. The second one is obtained by directly applying the Lifting Construction \cite{FatStrut} to the hexagonal lattice and the last one displays the total length associated to our scheme.

\section{Conclusion}
The problem of transmitting a continuous alphabet source over an AWGN channel was considered through an approach based on curves designed in layers of flat tori on the surface of a $(2N)$-dimensional Euclidean sphere. This approach explores connections with constructions of spherical codes and is related to the problem of finding dense projections of the lattice $c_1 \mathbb{Z} \oplus \ldots \oplus c_N \mathbb{Z}$.

This work is a generalization of both the scheme proposed in \cite{Sueli} and the Lifting Construction in \cite{FatStrut}. As a consequence, our scheme compares favorably to previous works in terms of the tradeoff between total length and small-ball radius, which is a proper figure of merit for this communication system.

In spite of the improvements in terms of length versus small-ball radius, the constructiveness, homogeneity and overall complexity of the decoding algorithm are features preserved from \cite{Sueli}.


\section*{Acknowledgment}

The authors would like to thank the Centre Interfacultaire Bernoulli (CIB) at EPFL where part of this work was developed, during the special semester on Combinatorial, Algebraic and Algorithmic Aspects of Coding Theory.



%
\bibliographystyle{plain}
\bibliography{curves}

\begin{thebibliography}{1}

\bibitem{BergerGostiaux}
M~Berger and S.~Gostiaux.
\newblock {\em {Differential Geometry: Manifolds, Curves and Surfaces}}.
\newblock Berlin: Springer-Verlag, 1998.

\bibitem{SloaneLivro}
J.~H. Conway and N.~J.~A. Sloane.
\newblock {\em {Sphere-packings, lattices, and groups}}.
\newblock Springer-Verlag, New York, NY, USA, 1998.

\bibitem{eric}
T~Ericson and V~Zinoviev.
\newblock {\em Codes on Euclidean Spheres}.
\newblock North-Holland Mathematical Library, 2001.

\bibitem{Hamkins1}
J.~Hamkins and K.~Zeger.
\newblock Asymptotically dense spherical codes. i. wrapped spherical codes.
\newblock 43(6):1774--1785, Nov. 1997.

\bibitem{Perfect}
J.~Martinet.
\newblock {\em Perfect Lattices in Euclidean Space}.
\newblock Springer-Verlag, Berlin Heidelberg New York, 2003.

\bibitem{FatStrut}
N.~J.~A. Sloane, V.~A. Vaishampayan, and S.~I.~R. Costa.
\newblock {The lifting construction: A general solution for the fat strut
  problem}.
\newblock In {\em IEEE International Symposium on Information Theory
  Proceedings (ISIT)}, pages 1037 --1041, 2010.

\bibitem{Cristiano}
C.~Torezzan, S.~I.~R Costa, and V.~Vaishampayan.
\newblock {Spherical codes on torus layers}.
\newblock In {\em IEEE International Symposium on Information Theory
  Proceedings (ISIT)}, pages 2033--2037, 2009.

\bibitem{Sueli}
V.~A. Vaishampayan and S.~I.~R. Costa.
\newblock {Curves on a sphere, shift-map dynamics, and error control for
  continuous alphabet sources}.
\newblock {\em IEEE Transactions on Information Theory}, 49:1658--1672, 2003.

\bibitem{polynomial}
N.~Wernersson, M.~Skoglund, and T.~Ramstad.
\newblock Polynomial based analog source-channel codes.
\newblock {\em Communications, IEEE Transactions on}, 57(9):2600 --2606,
  september 2009.

\end{thebibliography}

\end{document}